\newtheorem{theorem}{Theorem}
\newtheorem{remark}{Remark}
\newtheorem{lemma}{Lemma}
\newtheorem{proof}{Proof}
\begin{document}
%
\title{Sampled-data in Space Control of Scalar Semilinear Parabolic and Hyperbolic Systems}
%
%
%

\author{Igor~Furtat,
        Pavel~Gushchin
\thanks{I. Furtat was with the Institute for Problems in Mechanical Engineering
of the Russian Academy of Sciences, V.O., Bolshoj pr., 61, St. Petersburg, 199178
Russia e-mail: cainenash@mail.ru.}
\thanks{P. Gushchin is with the Gubkin University, 65 Leninsky Prospekt, Moscow, 119991, Russia.}}
\maketitle

\begin{abstract}
The paper describes a novel method of sampled-data in space (spatial variable) control of  scalar semilinear systems of parabolic and hyperbolic type with unknown parameters and distributed disturbances.
A finite set of sampled-data in the spatial variable measurements is available.
The control law depends on the function which depends on the spatial variable and on a finite set of state measurements.
A special choice of this function can affect on some properties of the closed-loop system.
In particular, the paper describes the examples of this function that provides reduced control costs in comparison with some other control methods.
The exponential stability of the closed-loop system and robustness with respect to unknown parameters and disturbances is proposed in terms of linear matrix inequalities (LMIs).
The simulations confirm theoretical results and show the efficiency of the proposed control law compared with some existing ones.
\end{abstract}

\begin{IEEEkeywords}
sampled-data control, semilinear partial differential equation, Lyapunov functional, exponential stability, linear matrix inequality.
\end{IEEEkeywords}

%
\IEEEpeerreviewmaketitle

\section{Introduction}

The paper considers systems presented by semilinear differential equations of parabolic and hyperbolic type with distributed disturbances.
Such systems describe convection-diffusion processes, a rotating column of a compressor with an air injection drive, heat distribution in a rod, string oscillation, etc.

The finite-dimensional control using Fourier transform and Galerkin method are considered in \cite{Hagen03,Smagina06,Candogan08,Zhong18}.
In \cite{Demetriou10} a control law based on moving sensors and actuators along the spatial variable is proposed for linear parabolic systems.
Also for such systems the adaptive control based on the backstepping method is proposed in \cite{Smyshlyaev05,Krstic08,Izadi15,Chen19}. 
However, the backstepping control law is complicated in calculation and implementation.

Differently from \cite{Hagen03,Smagina06,Candogan08, Demetriou10,Smyshlyaev05,Krstic08,Izadi15,Zhong18,Chen19}, in the present paper we propose a method for design a sampled-data in space control law.
For finite-dimensional systems a similar approach has been studied over the past few decades as a discretization (quantization) of measured signal, see, e.g. \cite{Delchamps89,Brockett00,Baillieul02,Zheng12,Furtat15,Li20}.
Unlike continuous control law such discrete one does not take into account the behavior of the plant between samples,
but in some cases it allows solving a number of technical problems, e.g. control via digital communication channels, control with restriction on information communication channels, etc.
In our paper the spatial sampling is used to obtain the implementable control law.

The observability of distributed systems with sampled-data space is studied in \cite{Khapalov93}.
The sampled-data in space control of infinite-dimensional systems with known parameters is considered in \cite{Cheng09, Logemann05}.
Differently from \cite{Cheng09,Logemann05} in \cite{Fridman12,Fridman14,Fridman20} a sampled-data in space control of parabolic systems with unknown distributed parameters is considered. 
Also in \cite{Fridman12,Fridman14,Fridman20} the analysis of exponential stability of the closed-loop system is proposed in terms linear matrix inequalities (LMIs).
However, the results of \cite{Cheng09,Logemann05,Fridman12,Fridman14,Fridman20} are obtained without disturbances.

In the present paper, as in \cite{Cheng09,Logemann05,Fridman12,Fridman14,Fridman20}, we propose the sampled-data in space control law.
Differently from \cite{Cheng09,Logemann05,Fridman12,Fridman14,Fridman20}, the main contribution of our paper is as follows:
\begin{enumerate}
\item[(i)] we consider scalar semiliniar parabolic and hyperbolic systems under distributed disturbances and unknown distributed parameters;
\item[(ii)] the proposed control law allows one to form various configurations of the control signal in a spatial variable in order to obtain various properties in the closed-loop system. In particular, stabilization with less control costs is considered;
\item[(iii)] the exponential stability of the closed-loop systems under disturbances and unknown parameters using the LMI approach is proposed.
\end{enumerate}

The paper is organized as follows. 
Problem formulation is presented in Section \ref{Sec2}.
Section \ref{Sec3} describes the control law design.
Sections \ref{Sec4}, \ref{Sec04} consider the analysis of exponential stability of the closed-loop systems in terms of LMIs.
The well-posedness problem is considered in Section \ref{Sec5}. 
Section \ref{Examples} illustrates an efficiency of the proposed method and its advantages compared with the existing methods. 
Section \ref{Sec7} collects some conclusions.

\textit{Notation}. Throughout the paper
$\mathbb R^{n}$ denotes the $n$ dimensional Euclidean space with the norm $|\cdot|$; 
$\mathbb R^{n \times m}$ is the set of all $n \times m$ real matrices; 
$P>0$ and $P \in \mathbb R^{n \times n}$ means that $P$ is symmetric and positive definite; 
the symmetric elements of the symmetric matrix will be denoted by $*$. 
Functions, continuously differentiable in all arguments, are referred to as of class $\mathcal{C}^1$.
Subscripts indicate partial derivatives $z_\xi = \frac{\partial z}{\partial \xi}$ and $z_{\xi \xi}=\frac{\partial^2 z}{\partial \xi^2}$.
$L_2(0, l)$ is the Hilbert space of square integrable functions $z(\xi )$, $\xi \in [0, l]$ with the corresponding  norm $ \| z \|_{L_2}^2=\int_0^l z^2(s) ds$.
$H_1(0, l)$ is the Sobolev space of absolutely continuous scalar functions $z: [0,l]  \to \mathbb R$ with the norm $\|z\|_{H_1}^2=\int_{0}^l z_s^2(s)ds$ and $z_\xi \in L_2(0, l)$.
$H_2(0, l)$ is the Sobolev space of scalar functions $z: [0,l] \in \mathbb R$ with absolutely continuous $z_{\xi}$, the norm $\|z\|_{H_2}^2=\int_{0}^l z_{ss}^2(s)ds $ and $z_{\xi\xi} \in L_2(0,l)$.

\section{Problem formulation} \label{Sec2}

\subsection{Models}

1) Consider a semilinear scalar equation of parabolic type in the form
\begin{equation}
\label{eq1}
\begin{array}{l}
z_{t}(x,t)=\frac{\partial}{\partial x} [a_1(x) z_{x}(x,t)] + a_2(x)z_x(x,t)
\\
~~~~~~~~~~~~~
+ \phi(z,x,t) z(x,t)+u(x,t)+f(x,t),
\\
~~~~~~~~~~~~~
x \in [0,l],~l>0,
\end{array}
\end{equation} 
with Dirichlet boundary conditions
\begin{equation}
\label{eq1_b1}
\begin{array}{l}
z(0,t)=z(l,t)=0
\end{array}
\end{equation} 
or with mixed boundary conditions
\begin{equation}
\label{eq1_b2}
\begin{array}{l}

z_x(0,t)=\gamma z(0,t),~~z(l,t)=0,~~\gamma \geq 0.
\end{array}
\end{equation}
Here 
$t \geq 0$,
$z: [0,l] \times [0,\infty) \to \mathbb R$ is the state,
$u(x,t)$ is the control. 
The functions $a_1(x)$, $a_2(x)$, $\phi(z(x,t),x,t)$ and $f(x,t)$ are unknown and of class $\mathcal{C}^1$.
Also these functions satisfy the following conditions: $a_1(x) \geq \underline{a}_1 > 0$,  $\underline{a}_2 \leq a_2(x) \leq \overline{a}_2$, 
$\underline{\phi} \leq \phi(z,x,t) \leq \overline{\phi}$, 
$|f(x,t)| \leq \bar{f}$ with known bounds $\underline{a}_1$, $\underline{a}_2$, $\overline{a}_2$, $\underline{\phi}$, $\overline{\phi}$, and $\bar{f}$.
The value of $\gamma$ in \eqref{eq1_b2} may be unknown.

\begin{remark}
\label{Rem01}
System \eqref{eq1} describes convection-diffusion processes under $u(x,t)=0$. 
In \cite{Hagen03} system  \eqref{eq1} describes a rotating stand of a compressor with an air injection drive $u(x,t)$, where $z(x,t)$ is the axial flow through the compressor. 
Also the boundary-value problem \eqref{eq1}, \eqref{eq1_b1} with $u(x,t)=0$ describes the heat distribution in a uniform one-dimensional rod with a fixed temperature at the ends, where $a_1$ and $\phi$ are coefficients of thermal conductivity and heat transfer with the environment respectively, $a_2=0$, $z(x,t)$ is the temperature at time $t$ in the point $x$.
\end{remark}


2) Additionally, consider a semilinear scalar differential equation of hyperbolic type in the form
\begin{equation}
\label{eq01}
\begin{array}{l}
z_{tt}(x,t)=\frac{\partial}{\partial x} [a_1(x) z_{x}(x,t)] + a_2(x) z_{x}(x,t)
\\
~~~~~~~~~~~~~
 + \phi(z,x,t) z(x,t)
\\
~~~~~~~~~~~~~
 - b(z,x,t) z_t(x,t)
+u(x,t)+f(x,t),
\\
~~~~~~~~~~~~~
x \in [0,l],
\end{array}
\end{equation} 
with Dirichlet boundary conditions \eqref{eq1_b1}
or with mixed boundary conditions
\eqref{eq1_b2}.
In  \eqref{eq01} the function $b(z,x,t)$ is of class $\mathcal{C}^1$ which satisfies the condition
 $0<\underline{b} \leq b(z,x,t) \leq \overline{b}$ with known bounds $\underline{b}$ and $\overline{b}$. 
 Other functions in \eqref{eq01} take the same values as in \eqref{eq1}.

\begin{remark}
\label{Rem02}
The boundary-value problem  \eqref{eq01}, \eqref{eq1_b1} with $u(x,t)=0$ describes vibrations of a uniform string with fixed ends and energy dissipation, where $a_1$, $b$ and $\phi$ are coefficients of elasticity, dissipation and stiffness respectively, $a_2=0$, $z(x, t)$ and $z_t(x, t)$ are deflection and speed of the string at time $t$ in the point $x$.
\end{remark}


\subsection{Objective}

Divide the segment $[0,l]$ into $N$  sampling intervals (not necessarily equal length) and denote
\begin{equation}
\label{eq1_sub_int}
\begin{array}{l}
0=x_0<x_1<...<x_N=l, ~~~ \Delta \geq x_{j+1}-x_j,
\\
j=0,...,N-1.
\end{array}
\end{equation}
Here $\Delta$ is known value.
Assume that $N$ sensors are placed inside these sub-intervals, i.e. only the signals $z(\bar{x}_j,t)$, $\bar{x}_j \in (x_j, x_{j-1})$,  $j=0,...,N-1$ are available for measurement.

Our objective is to design the control law (sampled-data in space) which ensures the exponential stability of the closed-loop system for \eqref{eq1} or \eqref{eq01}.


\section{Control law design} \label{Sec3}

Introduce the control law in the form
\begin{equation}
\label{eq_g_c_l}
\begin{array}{l}
u(x,t)= - K F^j(x,t),
~~
x \in [x_{j}, x_{j+1}), 
\\
~~~~~~~~~~~~
j=0,...,N-1,
\end{array}
\end{equation}
where  $K > 0$, 
the function $F^j(x,t)$ satisfies the following conditions:
\begin{enumerate}
\item[(i)] $F^j(x,t)$ is of class $\mathcal{C}^1$;
\item[(ii)] $F^j(\bar{x}_j,t)=z(\bar{x}_j,t)$, where $\bar{x}_j \in (x_j, x_{j-1})$;
\item[(iii)] $F_x^j(x,t)$ is bounded for any $x \in [0,l]$ and $t \geq 0$.
\end{enumerate}

The condition (i) will be required to solve the boundary-value problem in Section \ref{Sec5}. The conditions (ii) and (iii) will be required to proof the closed-loop system stability in Sections \ref{Sec4} and \ref{Sec04}.

Consider a few examples of the function $F^j(x,t)$.

\textit{Example 1.}
Let $F^j(x,t)=\varphi^j(x,t) z(\bar{x}_j,t)$, where $j=0,...,N-1$, $\varphi^j(x,t)$ is of class $\mathcal{C}^1$, $\varphi_x^j(x,t) z(\bar{x}_j,t)$ is bounded for any $x \in [x_{j}, x_{j+1})$, $t \geq 0$ and $\varphi^j(\bar{x}_j,t)=1$.


\textit{Example 2.}
In \cite{Fridman12,Fridman14,Fridman20} $u(x,t) \neq 0$ over the interval $[x_{j}, x_{j+1})$ when $z(\bar{x}_j,t) \neq 0$. 
Now we consider a case when $u(x,t) \neq 0$ only on a part of the interval $[x_{j}, x_{j+1})$ for $z(\bar{x}_j,t) \neq 0$.
Let  in example 1 the function $\varphi^j(x,t)$ is given in the form
\footnotesize
\begin{equation}
\label{fff}
\begin{array}{l}
\varphi^j(x,t)=
\begin{cases}
   0.5+0.5\cos\left(\frac{\alpha (x-\bar{x}_j)}{1+z^2(\bar{x}_i,t)}\right) ~~\mbox{if} 
   \\ 
   x \in \left[\bar{x}_j-\frac{\pi (1+z^2(\bar{x}_j,t))}{\alpha};\bar{x}_j+\frac{\pi (1+z^2(\bar{x}_j,t))}{\alpha}\right]; \\
   0 ~~\mbox{if} 
   \\ 
   x \in \left[x_j; \bar{x}_j-\frac{\pi (1+z^2(\bar{x}_j,t))}{\alpha}\right) \\ \cup \left(\bar{x}_j+\frac{\pi (1+z^2(\bar{x}_j,t))}{\alpha}; x_{j+1} \right).
 \end{cases}
 \end{array}
\end{equation}
\normalsize
Here $\alpha > 0$ is a sufficiently large number that can be selected from the condition $\alpha > \max \left \{\frac{\bar{x}_j-x_{j}}{\pi(1+z^2(\bar{x}_j,t))},\frac{x_{j+1}-\bar{x}_j}{\pi(1+z^2(\bar{x}_j,t))} \right\}$. 
It is obvious, that $\varphi^j(\bar{x}_j,t)=1$,  $\varphi^j(x,t)$ is of class $\mathcal{C}^1$, $\varphi_x^j(x,t) z(\bar{x}_j,t)$ is bounded for any $x \in [x_{j}, x_{j+1})$, $t \geq 0$ and $z(\bar{x}_j,t) \in \mathbb R$. 

\begin{figure}[h]
\center{\includegraphics[width=0.7\linewidth]{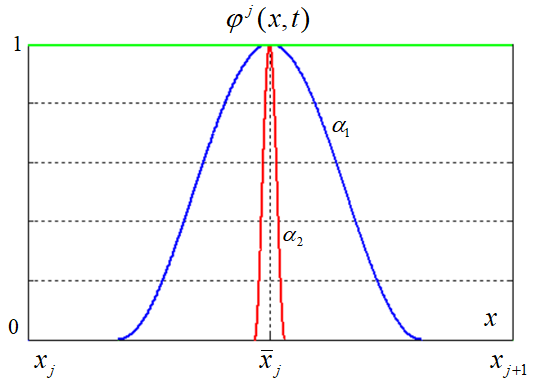}}
\caption{The plots of $\varphi^j(x,t)$ from example 2 (for $z(\bar{x}_j,t)=const > 0$) on the interval $x \in [x_{j}, x_{j+1})$ at various values of $\alpha>0$ ($(\alpha=\alpha_1)<(\alpha=\alpha_2)$).}
\label{Fig02}
\end{figure}


\textit{Example 3.} 
In \eqref{fff} transition between the values of $\varphi^j$ depends on $z(\bar{x}_j,t)$.
Now let us consider new function that eliminates this dependency:
\small
\begin{equation}
\label{fff11}
\begin{array}{l}
\varphi^j(x,t)=
\begin{cases}
  e^{-\frac{\alpha}{(1+z^2(\bar{x}_j,t))(\beta_j^2-(x-\bar{x}_j)^2)}+\frac{\alpha}{\beta_j(1+z^2(\bar{x}_j,t))}}   \\ \mbox{if} ~~ x \in \left(\bar{x}_j-\beta_j;\bar{x}_j+\beta_j\right); \\
   0 ~~\mbox{if} \\ x \in \left[x_j; \bar{x}_j-\beta_j\right] \cup \left[\bar{x}_j+\beta_j; x_{j+1} \right).
 \end{cases}
 \end{array}
\end{equation}
\normalsize
Here $\alpha > 0$, $\beta_j \leq \min\{x_{j+1}-\bar{x}_j,\bar{x}_j-x_{j}\}$. 
Differently from \eqref{fff}, transition between the function values in \eqref{fff11} does not depend on $z(\bar{x}_j,t)$, 
but the transition depends only on $\beta_j$. 
We have $\varphi^j(z(\bar{x}_j,t),\bar{x}_j,t)=1$,  $\varphi^j(z(\bar{x}_j,t),x,t)$ is of class $\mathcal{C}^1$ and $\varphi_x^j(z(\bar{x}_j,t),x,t) z(\bar{x}_j,t)$ is bounded for any $x \in [x_{j}, x_{j+1})$, $t \geq 0$ and $z(\bar{x}_j,t) \in \mathbb R$. 
The graphs of \eqref{fff11} are given in Fig.~\ref{Fig012}.

\begin{figure}[h]
\center{\includegraphics[width=0.7\linewidth]{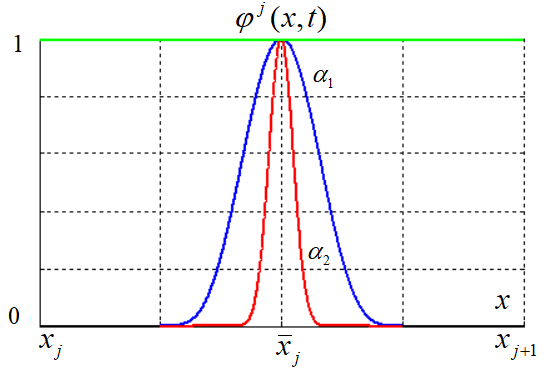}}
\caption{The plots of $\varphi^j(z(\bar{x}_j,t),x,t)$ from Example 3 (for $z(\bar{x}_j,t)=const > 0$)  on the interval $x \in [x_{j}, x_{j+1})$ at various values of $\alpha>0$ ($(\alpha=\alpha_1)<(\alpha=\alpha_2)$).}
\label{Fig012}
\end{figure}


\begin{remark}
\label{Rem1}
In \cite{Fridman12,Fridman14,Fridman20} the control law $u(x,t)= - K z(\bar{x}_j,t)$, $\bar{x}_j = 0.5 (x_j + x_{j+1})$, $x \in [x_{j}, x_{j+1})$, 
$j=0,...,N-1$ is proposed. 
This control law is a special case of \eqref{eq_g_c_l} when $F^j(x,t)=z(\bar{x}_j,t)$ (or $\varphi^j(x,t)=1$ in example 1). 
If the function $F^j(x,t)$ is chosen as in examples 2 and 3, then the area under the curve $\varphi^j(x,t)=1$  can be significantly larger than the area under the proposed curves (see Figs.~\ref{Fig02}, \ref{Fig012}). Thus, the proposed control law can stabilize system \eqref{eq1} at a lower cost of the control signal.
\end{remark}


\section{Main result for a parabolic type system \eqref{eq1}} \label{Sec4}

Substituting \eqref{eq_g_c_l} into \eqref{eq1}, represent the closed-loop system as follows
\begin{equation}
\label{eq_cl_loop}
\begin{array}{l}
z_{t}(x,t)=\frac{\partial}{\partial x} [a_1(x) z_{x}(x,t)]+a_2(x)z_x(x,t)
\\
~~~~~~~~~~~~~ +f(x,t)-(K-\phi(z,x,t)) z(x,t) 
\\
~~~~~~~~~~~~~
+ K[z(x,t) - F^j(x,t)],
\\
~~~~~~~~~~~~~ x \in [x_{j}, x_{j+1}), ~~~ j=0,...,N-1.
\end{array}
\end{equation}

\begin{theorem}
\label{Th1}
Consider the closed-loop system \eqref{eq_cl_loop} under the boundary conditions \eqref{eq1_b1} or \eqref{eq1_b2}.
Given $\bar{R}>0$, $\Delta>0$ and $\delta>0$ there exists $K>0$ such that the following two linear matrix inequalities hold
\begin{equation}
\label{eq_LMI0}
\begin{array}{l}
\Psi (a_2=\underline{a}_2) \leq 0 ~~~ \mbox{and} ~~~ \Psi (a_2=\overline{a}_2) \leq 0,
\end{array}
\end{equation}
where
\begin{equation}
\label{eq_LMI}
\begin{array}{l}
\Psi=
\begin{bmatrix}
\Psi_{11} & a_2 & 1 & 0 \\
* & \Psi_{22} & 0 & -\frac{4 \Delta^2}{\pi^2} K \bar{R}^{-1} \\
* & * & -\beta_1 & 0 \\
* & * & * & -\beta_2 + \frac{4 \Delta^2}{\pi^2} K \bar{R}^{-1}
\end{bmatrix},
\\
\Psi_{11}=-2K+2\overline{\phi} + 2\delta + K \bar{R}, 
\\
\Psi_{22}=-2\underline{a}_1 + \frac{4 \Delta^2}{\pi^2} K \bar{R}^{-1}. 
\end{array}
\end{equation}
Then the following inequality holds
\begin{equation}
\label{eq_th2}
\begin{array}{l}
\| z(\cdot,t) \|^2_{L_2}  
\leq 
e^{-2\delta t} \| z(\cdot,0) \|^2_{L_2} 
+\frac{\gamma}{2\delta},
\end{array}
\end{equation}
where
\[\gamma=\beta_1 \sup\limits_{t \geq 0} \int_0^{l}f^2(x,t)dx + \beta_2 \sup\limits_{t \geq 0}  \sum_{j=0}^{N-1} \int_{x_j}^{x_{j+1}} (F_x^j(x,t))^2 dx.\]

\end{theorem}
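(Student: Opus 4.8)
The plan is to prove exponential stability with the Lyapunov functional $V(t)=\|z(\cdot,t)\|_{L_2}^2=\int_0^l z^2(x,t)\,dx$ and to establish along \eqref{eq_cl_loop} the dissipation inequality $\dot V(t)+2\delta V(t)\le \gamma$, with $\gamma$ as in the statement. Once this holds, the comparison lemma applied to $\dot V\le -2\delta V+\gamma$ gives $V(t)\le e^{-2\delta t}V(0)+\tfrac{\gamma}{2\delta}(1-e^{-2\delta t})$, which is bounded above by the right-hand side of \eqref{eq_th2}. So the whole task reduces to producing that dissipation inequality out of the two LMIs \eqref{eq_LMI0}.

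First I would differentiate, $\dot V=2\int_0^l z z_t\,dx$, and substitute the closed-loop dynamics \eqref{eq_cl_loop}. The diffusion term is integrated by parts over each subinterval $[x_j,x_{j+1})$; since $z$, $z_x$ and $a_1$ are continuous the interior node contributions telescope, and the remaining boundary terms vanish under \eqref{eq1_b1} or are nonpositive under \eqref{eq1_b2} (using $\gamma\ge 0$), leaving $-2\int_0^l a_1 z_x^2\,dx\le -2\underline a_1\int_0^l z_x^2\,dx$. The term $-2(K-\phi)z^2$ is bounded using $\phi\le\overline\phi$. The convection term $2\int a_2 z z_x\,dx$ cannot be integrated by parts because $a_2$ is unknown and spatially varying; instead I keep it as the bilinear cross term, which is exactly the off-diagonal entry $\Psi_{12}=a_2$.

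The heart of the argument is the sampling term $2K\int z\,(z-F^j)\,dx$. Setting $\zeta=z-F^j$, condition (ii) gives $\zeta(\bar x_j,t)=0$, so $\zeta$ vanishes at an interior point of each subinterval. Splitting $[x_j,x_{j+1})$ at $\bar x_j$ and applying Wirtinger's inequality on each half (each of length at most $\Delta$) yields $\int_{x_j}^{x_{j+1}}\zeta^2\,dx\le \tfrac{4\Delta^2}{\pi^2}\int_{x_j}^{x_{j+1}}(z_x-F_x^j)^2\,dx$, which accounts for the recurring factor $\tfrac{4\Delta^2}{\pi^2}$. Combining this with Young's inequality $2Kz\zeta\le K\bar R\,z^2+K\bar R^{-1}\zeta^2$ and expanding $(z_x-F_x^j)^2$ produces exactly the entries $\Psi_{11}$, $\Psi_{22}$ and the cross term $\Psi_{24}=-\tfrac{4\Delta^2}{\pi^2}K\bar R^{-1}$, together with leftover contributions $\tfrac{4\Delta^2}{\pi^2}K\bar R^{-1}(F_x^j)^2$ and $2zf$. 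Introducing the augmented vector $\eta=[\,z,\ z_x,\ f,\ F_x^j\,]^{\!\top}$, the resulting integrand equals $\eta^\top\Psi\eta+\beta_1 f^2+\beta_2 (F_x^j)^2$: the $-\beta_1$ and $-\beta_2$ on the diagonal of $\Psi$ absorb the $f$ and $F_x^j$ squares that are instead carried into $\gamma$, where condition (iii) guarantees $\sup_t\sum_j\int (F_x^j)^2$ is finite. Hence $\Psi\le 0$ forces $\dot V+2\delta V\le \beta_1\int f^2\,dx+\beta_2\sum_j\int(F_x^j)^2\,dx\le\gamma$, as required.

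Finally, since $\Psi$ is affine in $a_2$, it is negative semidefinite for every $a_2\in[\underline a_2,\overline a_2]$ if and only if it is at the two endpoints, which is precisely \eqref{eq_LMI0}; and feasibility of some admissible $K>0$ (with $\beta_1,\beta_2,\bar R$ treated as tuning parameters) follows from a routine Schur-complement inspection of the diagonal signs. I expect the main obstacle to be the simultaneous treatment of the two nonstandard terms: handling the unknown spatially varying convection $a_2(x)$ without integrating by parts (resolved by the affine dependence and the two-endpoint LMI), and coupling the Wirtinger estimate for $z-F^j$ with the completion of squares so that the $\tfrac{4\Delta^2}{\pi^2}K\bar R^{-1}$ contributions land consistently in $\Psi_{22}$, $\Psi_{24}$ and $\Psi_{44}$.
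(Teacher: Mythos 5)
Your proposal follows essentially the same route as the paper: the $L_2$ Lyapunov functional, integration by parts for the diffusion term, Young's inequality on the sampling term $2K\int z(z-F^j)\,dx$, a Wirtinger-type bound on $z-F^j$ exploiting $z(\bar x_j,t)=F^j(\bar x_j,t)$ (your split at $\bar x_j$ is exactly the mechanism of the paper's ``extended Wirtinger'' lemma), assembly into $\eta^\top\Psi\eta$ with $\eta=\mathrm{col}\{z,z_x,f,F_x^j\}$, the vertex argument from affinity in $a_2$, and the comparison lemma to pass from $\dot V+2\delta V\le\gamma$ to \eqref{eq_th2}. No substantive differences.
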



Before proving this theorem, consider two auxiliary lemmas.

\begin{lemma} (Extended Wirtinger inequality).
\label{lem_Har_ext}
Let $z \in H_1(0,l)$ be a scalar function, $0 = \chi_0 < \chi_1 < ... \chi_{n-1} < \chi_n = l$ and $\Delta \geq \chi_i - \chi_{i+1}$, $i=0,...,n-1$. If $z(\chi_i)=0$, $i=1,...,n-1$, then
\begin{equation}
\label{eq_Har10}
\begin{array}{l}
\int_0^l z^2(\xi) d\xi \leq \frac{4\Delta^2}{\pi^2} \int_0^l z_{\xi}^2 (\xi) d\xi.
\end{array}
\end{equation} 
\end{lemma}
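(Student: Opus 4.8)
The plan is to reduce the global inequality \eqref{eq_Har10} to a sum of one-dimensional Wirtinger inequalities on the individual sampling intervals $[\chi_i,\chi_{i+1}]$, exploiting that on each such interval $z$ is pinned to zero at one or both endpoints. First I would split both integrals over the partition, writing $\int_0^l z^2\,d\xi=\sum_{i=0}^{n-1}\int_{\chi_i}^{\chi_{i+1}} z^2\,d\xi$ and $\int_0^l z_\xi^2\,d\xi=\sum_{i=0}^{n-1}\int_{\chi_i}^{\chi_{i+1}} z_\xi^2\,d\xi$, so that it suffices to bound each term of the first sum by $\tfrac{4\Delta^2}{\pi^2}$ times the corresponding term of the second, with $h_i=\chi_{i+1}-\chi_i\le\Delta$.

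Second, I would classify the subintervals by how many endpoints carry a zero, since by hypothesis $z(\chi_i)=0$ for the interior nodes $i=1,\dots,n-1$ but not necessarily at $\chi_0=0$ or $\chi_n=l$ (here $n\ge 2$). On an interior subinterval $[\chi_i,\chi_{i+1}]$ with $1\le i\le n-2$ the function vanishes at both ends, and the classical Poincaré–Wirtinger inequality with two Dirichlet conditions gives $\int z^2\le \tfrac{h_i^2}{\pi^2}\int z_\xi^2$. On the two boundary subintervals $[0,\chi_1]$ and $[\chi_{n-1},l]$ the function vanishes at only one end, and the Wirtinger inequality with a single Dirichlet condition (the remaining end being free) gives the weaker constant $\int z^2\le \tfrac{4h_i^2}{\pi^2}\int z_\xi^2$.

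These two building blocks are the heart of the argument, and I would establish each by expanding $z$ on the subinterval in the orthogonal eigenbasis of $-z_{\xi\xi}=\lambda z$ under the relevant boundary conditions: the sines $\sin\!\big(\tfrac{k\pi(\xi-\chi_i)}{h_i}\big)$ in the two-Dirichlet case (smallest eigenvalue $\pi^2/h_i^2$) and $\sin\!\big(\tfrac{(2k-1)\pi(\xi-\chi_i)}{2h_i}\big)$ in the Dirichlet–Neumann case (smallest eigenvalue $\pi^2/(4h_i^2)$), then invoking Parseval's identity so that the minimal eigenvalue bounds the Rayleigh quotient $\int z_\xi^2/\int z^2$ from below. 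The main obstacle is precisely the single-Dirichlet case: it is what forces the factor $4$, and one must be careful that the boundary subintervals always fall under this weaker estimate, since the free (non-zero) end produces a natural Neumann condition in the minimization and hence the larger admissible constant.

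Finally, since $h_i\le\Delta$ for every $i$ and $\tfrac{h_i^2}{\pi^2}\le \tfrac{4h_i^2}{\pi^2}\le \tfrac{4\Delta^2}{\pi^2}$, each subinterval estimate is dominated by the single uniform constant $\tfrac{4\Delta^2}{\pi^2}$. Summing over $i=0,\dots,n-1$ then yields $\int_0^l z^2\,d\xi\le \tfrac{4\Delta^2}{\pi^2}\sum_{i=0}^{n-1}\int_{\chi_i}^{\chi_{i+1}} z_\xi^2\,d\xi=\tfrac{4\Delta^2}{\pi^2}\int_0^l z_\xi^2\,d\xi$, which is exactly \eqref{eq_Har10}.
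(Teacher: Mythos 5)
Your proposal is correct and follows essentially the same route as the paper: split the integrals over the sampling subintervals, apply the classical Wirtinger inequality with constant $h_i^2/\pi^2$ on interior subintervals (two Dirichlet endpoints) and $4h_i^2/\pi^2$ on the two boundary subintervals (one Dirichlet endpoint), and absorb everything into the uniform constant $4\Delta^2/\pi^2$. The only difference is that you additionally sketch the eigenfunction/Rayleigh-quotient proof of the one-dimensional building blocks, which the paper simply invokes, and your endpoint bookkeeping is in fact cleaner than the paper's.
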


\begin{proof}
Rewrite the left hand side of \eqref{eq_Har10} in the form
\begin{equation*}
\label{eq_Har100}
\begin{array}{l}
\int_0^l z^2(\xi) d\xi 
=\int_0^{\chi_1} z^2(\xi) d\xi 
\\
~~~~~~~~~~~~~~~
+\sum_{i=1}^{n-1}\int_{\chi_i}^{\chi_{i+1}} z^2(\xi) d\xi 
+ \int_{\chi_n}^l z^2(\xi) d\xi.
\end{array}
\end{equation*} 
Using Wirtinger's inequality, we obtain
\begin{equation*}
\label{eq_Har100}
\begin{array}{l}
\int_0^{\chi_1} z^2(\xi) d\xi +  
\int_{\chi_n}^l z^2(\xi) d\xi + 
\sum_{i=1}^{n-1}\int_{\chi_i}^{\chi_{i+1}} z^2(\xi) d\xi \leq
\\
\leq 
\frac{4 \Delta^2}{\pi^2} \left(\int_0^{\chi_1} z_{\xi}^2 (\xi) d\xi 
+ \int_{\chi_n}^l z_{\xi}^2 (\xi) d\xi \right)
\\
+\frac{\Delta^2}{\pi^2}  \sum_{i=1}^{n-1} \int_{\chi_i}^{\chi_{i-1}} z_{\xi}^2 (\xi) d\xi

\leq \frac{4\Delta^2}{\pi^2} \int_0^l z_{\xi}^2 (\xi) d\xi.
\end{array}
\end{equation*} 
\end{proof}

\begin{lemma}
\label{lem_Hal}
Let the function $V: [t_0, \infty) \to [0, \infty)$ be differentiable on $[t_0, \infty)$, $t_0 \geq 0$. 
Consider the following differential inequality
\begin{equation}
\label{eq_Hal1}
\begin{array}{l}
\dot{V}(t) \leq -\delta V(t) + f(t),
\end{array}
\end{equation} 
where $\delta>0$ and $\sup\limits_{t \geq t_0}|f(t)| = \beta$. 
Then the following inequality holds
\begin{equation}
\label{eq_Hal2}
\begin{array}{l}
V(t) \leq e^{-\delta (t-t_0)} V(t_0) + \frac{\beta}{\delta}, ~~~ t \geq t_0.
\end{array}
\end{equation} 
\end{lemma}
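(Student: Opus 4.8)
The plan is to reduce this differential inequality to a statement about the monotonicity of a suitably weighted auxiliary function, thereby avoiding any integrability assumption on $\dot{V}$ beyond plain differentiability. First I would introduce the integrating factor $e^{\delta t}$ and define the auxiliary function $W(t) = e^{\delta t}\bigl(V(t) - \frac{\beta}{\delta}\bigr)$, which is differentiable on $[t_0,\infty)$ because $V$ is.

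Next I would compute $\dot{W}(t) = e^{\delta t}\bigl(\dot{V}(t) + \delta V(t) - \beta\bigr)$. Substituting the hypothesis \eqref{eq_Hal1} together with the bound $f(t) \leq \sup_{t \geq t_0}|f(t)| = \beta$ gives $\dot{V}(t) + \delta V(t) - \beta \leq f(t) - \beta \leq 0$, so $\dot{W}(t) \leq 0$ for all $t \geq t_0$. Hence $W$ is nonincreasing on $[t_0,\infty)$ by the mean value theorem, and in particular $W(t) \leq W(t_0)$ for every $t \geq t_0$.

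Then I would unwind the definition of $W$. The inequality $W(t) \leq W(t_0)$ reads $e^{\delta t}\bigl(V(t) - \frac{\beta}{\delta}\bigr) \leq e^{\delta t_0}\bigl(V(t_0) - \frac{\beta}{\delta}\bigr)$; multiplying through by $e^{-\delta t}$ and rearranging yields $V(t) \leq e^{-\delta(t-t_0)}V(t_0) + \frac{\beta}{\delta}\bigl(1 - e^{-\delta(t-t_0)}\bigr)$. Since $t \geq t_0$ and $\delta > 0$, we have $0 < e^{-\delta(t-t_0)} \leq 1$, so $1 - e^{-\delta(t-t_0)} \leq 1$ and the claimed bound \eqref{eq_Hal2} follows at once.

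The argument is entirely elementary and I expect no serious obstacle; the only point requiring mild care is the regularity. Because $V$ is assumed merely differentiable (not of class $\mathcal{C}^1$), $\dot{V}$ need not be integrable, so I deliberately prefer the monotonicity route over directly integrating $\frac{d}{dt}\bigl[e^{\delta t}V(t)\bigr] \leq \beta e^{\delta t}$ — the former relies only on the mean value theorem and sidesteps any question of Riemann or Lebesgue integrability of the derivative.
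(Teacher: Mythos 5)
Your proof is correct, and it takes a genuinely different route from the paper's. The paper proves the lemma by a comparison-principle argument: it introduces the majorant $y(t)=e^{-\delta(t-t_0)}V(t_0)+\frac{\beta}{\delta}$ solving $\dot{y}=-\delta y+\beta$, perturbs it by a decreasing sequence $\varepsilon_n\to 0^{+}$, and derives a contradiction at the first crossing time $t^{*}=\inf\{t>t_0: V(t)\geq y_n(t)\}$ by comparing one-sided derivatives there, finally letting $n\to\infty$. Your argument instead multiplies by the integrating factor $e^{\delta t}$, observes that $W(t)=e^{\delta t}\bigl(V(t)-\frac{\beta}{\delta}\bigr)$ has nonpositive derivative, and invokes only the mean value theorem to conclude monotonicity. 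This is shorter, avoids the limiting sequence and the delicate reasoning about derivatives at the crossing time (which is the least rigorous part of the paper's write-up), and it even yields the slightly sharper bound $V(t)\leq e^{-\delta(t-t_0)}V(t_0)+\frac{\beta}{\delta}\bigl(1-e^{-\delta(t-t_0)}\bigr)$, from which the stated estimate follows because $\beta=\sup_{t\geq t_0}|f(t)|\geq 0$. Your closing remark about regularity is also well taken: since $V$ is only assumed differentiable, the monotonicity route via the mean value theorem is cleaner than integrating $\frac{d}{dt}\bigl[e^{\delta t}V(t)\bigr]\leq\beta e^{\delta t}$ directly. Both approaches are valid; yours is the more elementary of the two.
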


\begin{proof}
Denote by
\begin{equation*}
\label{eq_Hal3}
\begin{array}{l}
y(t)=e^{-\delta (t-t_0)} V(t_0) +\frac{\beta}{\delta}, ~~~ t \geq t_0.
\end{array}
\end{equation*}
It is easy to verify that the function $y(t)$ is a solution of the differential equation
\begin{equation}
\label{eq_Hal4}
\begin{array}{l}
\dot{y}(t) = -\delta y(t) + \beta, ~~~ t \geq t_0.
\end{array}
\end{equation} 
Using the comparison principle, we will show that $V(t) \leq y(t)$ for any $t \geq t_0$. Let $\varepsilon_1 > \varepsilon_2 > ... > \varepsilon_n > ...$ is a sequence of positive numbers such that $\lim\limits_{n \to \infty} \varepsilon_n = 0^{+}$. Then the function
\begin{equation}
\label{eq_Hal5}
\begin{array}{l}
\dot{y}_n(t) = -\delta y(t) + \frac{\beta}{\delta} + \frac{\varepsilon_n}{\delta}
\end{array}
\end{equation} 
is a solution of the differential equation
\begin{equation}
\label{eq_Hal6}
\begin{array}{l}
\dot{y}(t) = -\delta y(t) + \beta+ \varepsilon_n.
\end{array}
\end{equation} 
Suppose that there exists $t^*>t_0$ such that
\begin{equation}
\label{eq_Hal7}
\begin{array}{l}
t^*=\inf\{t>t_0: V(t) \geq y_n(t) \}.
\end{array}
\end{equation} 
Then $V(t^*) \geq y_n(t^*)$ and $V(t)<y_n(t)$ at $t_0 \leq t \leq t^*$.
The condition $\dot{V}(t^*) < \dot{y}_n(t^*)$ holds from \eqref{eq_Hal1} and \eqref{eq_Hal6}. 
On the other hand, the condition $\dot{V}(t^*) \geq \dot{y}_n(t^*)$ follows from $V(t) < y_n(t)$ at $t<t^*$ and $V(t^*)=y_n(t^*)$.
We have a contradiction. 
 Therefore, $V(t) < y_n(t)$ for any $t \geq t_0$ and $n=1,2,...$ 
 Consequently, $V(t) \leq \lim\limits_{n \to \infty} y_n(t)=y(t)$ for all $t \geq t_0$. 
 Lemma \ref{lem_Hal} is proved.
\end{proof}

\begin{proof} 
To analyze the stability of the closed-loop system \eqref{eq_cl_loop}, consider the following Lyapunov functional
\begin{equation}
\label{eq_L_F}
\begin{array}{l}
V(t)=\int_{0}^{l} z^2(x,t) dx.
\end{array}
\end{equation}
Differentiating $V(t)$ in time along the trajectories of \eqref{eq_cl_loop}, we have
\begin{equation}
\label{eq_L_F_dif1}
\begin{array}{l}
\dot{V}(t) + 2 \delta V(t)=
2\int_{0}^{l}[z(x,t)\frac{\partial}{\partial x} [a_1(x) z_{x}(x,t)] 
\\
~~~~~~
+ a_2(x)z_x(x,t)z(x,t)+2 \delta \int_{0}^{l} z^2(x,t) dx
\\
~~~~~~
-(K-\phi(z,x,t)) z^2(x,t) +z(x,t) f(x,t)]dx 
\\
~~~~~~
+ 2 K \sum_{j=0}^{N-1} \int_{x_j}^{x_{j+1}} z(x,t) [z(x,t) -F^j(x,t)].
\end{array}
\end{equation}
Taking into account the boundary conditions \eqref{eq1_b1} or \eqref{eq1_b2} and integrating by parts the first term in \eqref{eq_L_F_dif1}, we get
\begin{equation}
\label{eq_po_chast}
\begin{array}{l}
2 \int_{0}^{l}  z(x,t)\frac{\partial}{\partial x} [a_1(x) z_{x}(x,t)] dx
\\
= 2a_1(x) z(x,t) z_x(x,t) \Big|_{0}^{l}-2 \int_{0}^{l} a_1(x) z_{x}^2(x,t) dx
\\
\leq -2 \underline{a}_1 \int_{0}^{l} z_{x}^2(x,t) dx.
\end{array}
\end{equation}

Using Young's inequality for the penultimate term in \eqref{eq_L_F_dif1}, we obtain
\begin{equation}
\label{eq_W_I1}
\begin{array}{l}
2 K \sum_{j=0}^{N-1} \int_{x_j}^{x_{j+1}} z(x,t) [z(x,t) - F^j(x,t)] dx
\\
\leq 
K \bar{R} \int_{0}^{l} z^2(x,t) dx
\\
+ K \bar{R}^{-1} \sum_{j=0}^{N-1} \int_{x_j}^{x_{j+1}} [z(x,t) - F^j(x,t)]^2 dx.
\end{array}
\end{equation}

Applying Lemma \ref{lem_Har_ext} to \eqref{eq_W_I1} and taking into account $z(\bar{x}_j,t)=F^j(\bar{x}_j,t)$, one gets

\begin{equation}
\label{eq_W_I55}
\begin{array}{l}
K \bar{R}^{-1} \sum_{j=0}^{N-1} \int_{x_j}^{x_{j+1}} [z(x,t) - F^j(x,t)]^2 dx \leq
\\
\leq
\frac{4 \Delta^2}{\pi^2} K \bar{R}^{-1} \sum_{j=0}^{N-1} \int_{x_j}^{x_{j+1}}  [z_x^2(x,t) 
\\
- 2 z_x(x,t) F_x^j(x,t) + (F_x^j(x,t))^2] dx.
\end{array}
\end{equation}

Denote by $\eta_j=col\{z(x,t), z_x(x,t),f(x,t), F_x^j(x,t)\}$. Applying \eqref{eq_po_chast}--\eqref{eq_W_I55} to \eqref{eq_L_F_dif1}, we have

\begin{equation}
\label{eq_Lyap1}
\begin{array}{l}
\dot{V}(t)+ 2\delta V(t) -\beta_1 \int_{0}^l f^2(x,t) dx
\\
~~~~~~~~~~~
 - \beta_2  \sum_{j=0}^{N-1} \int_{x_j}^{x_{j+1}} F_x^j(x,t) dx 
\\
~~~~~~~~~~~
 \leq 
 \sum_{j=0}^{N-1} \int_{x_j}^{x_{j+1}} \eta^{\rm T} \Psi \eta dx,
\end{array}
\end{equation}
where $\Psi$ is given by \eqref{eq_LMI}. 
The expression \eqref{eq_LMI} is affine with respect to $a_2$.
According to\cite{Fridman14}, if LMIs \eqref{eq_LMI0} holds in the vertices $a_2=\underline{a}_2$ and $a_2=\overline{a}_2$, then LMI $\Psi \leq 0$ holds for any $a_2 \in [\underline{a}_2,\overline{a}_2]$.
Therefore, the inequality
\begin{equation}
\label{eq_Nerav}
\begin{array}{l}
\dot{V}(t)+ 2\delta V(t) -\beta_1 \int_{0}^l f^2(x,t) dx - 
\\
\beta_2  \sum_{j=0}^{N-1} \int_{x_j}^{x_{j+1}} F_x^j(x,t) dx \leq 0
\end{array}
\end{equation} 
holds. 
Using Lemma \ref{lem_Hal}, the solution of the differential inequality \eqref{eq_Nerav} can be defined as 
\begin{equation}
\label{eq_solution_V}
\begin{array}{l}
V(t) \leq V(0) e^{-2 \delta t} + \frac{\gamma}{2 \delta}.
\end{array}
\end{equation}
Expression \eqref{eq_th2} follows from \eqref{eq_solution_V}.
Theorem \ref{Th1} is proved.

\end{proof}


\section{Main result for a system of hyperbolic type \eqref{eq01}} \label{Sec04}

Substituting \eqref{eq_g_c_l} into \eqref{eq01}, rewrite the closed-loop system in the form
\begin{equation}
\label{eq_cl_loop0}
\begin{array}{l}
z_{tt}(x,t)=\frac{\partial}{\partial x} [a_1(x) z_{x}(x,t)] + a_2(x) z_x(x,t)
\\
~~~~~~~~~~~~~
- b(z,x,t) z_t(x,t) +f(x,t)
\\
~~~~~~~~~~~~~  
- (K-\phi(z,x,t)) z(x,t) 
\\
~~~~~~~~~~~~~
+ K[z(x,t) - F^j(x,t)],
\\
~~~~~~~~~~~~~~ x \in [x_{j}, x_{j+1}), ~~~ j=0,...,N-1.
\end{array}
\end{equation}

\begin{theorem}
\label{Th2}
Consider the closed-loop system \eqref{eq_cl_loop0} under the boundary conditions \eqref{eq1_b1} or \eqref{eq1_b2}.
Given $p \in (-0.5;0.5)$, $\bar{R}>0$, $\Delta>0$, and $\delta>0$ there exists $K>0$ such that
the following LMIs hold
\begin{equation}
\label{eq_LMI00}
\begin{array}{l}
\bar{\Psi} \left( \phi=\{\underline{\phi},\overline{\phi}\}, a_2=\{\underline{a}_2,\overline{a}_2\}, b=\{\underline{b},\overline{b}\} \right) \leq 0,
\end{array}
\end{equation}
where
\small
\begin{equation}
\label{eq_LMI000}
\begin{array}{l}
\bar{\Psi}=
\begin{bmatrix}
\bar{\Psi}_{11} & 0.5p a_2 & \bar{\Psi}_{13} & 0.5 p & 0\\
* & \bar{\Psi}_{22} & a_2 & 0 & -\frac{4 \Delta^2}{\pi^2} K \bar{R}^{-1}\\
* & * & \bar{\Psi}_{33}  & 1 & 0 \\
* & * & * & -\beta_1 & 0 \\
* & * & * & * & -\beta_2 + \frac{4 \Delta^2}{\pi^2} K \bar{R}^{-1}
\end{bmatrix},
\\
\\
\bar{\Psi}_{11}=-0.5 p (K- \overline{\phi}) + 0.25 K \bar{R} p^2 + 2\delta,
\\
\bar{\Psi}_{13} = 1-0.5 p b -K+ \phi + K \bar{R}p,
\\
\bar{\Psi}_{22} = - p \underline{a}_1 + \frac{4 \Delta^2}{\pi^2} K \bar{R}^{-1},
\\
\bar{\Psi}_{33} = 0.5 p - 2 \underline{b} + 0.5 K \bar{R}.
\end{array}
\end{equation}
\normalsize
Then inequality \eqref{eq_th2} holds by taking into account parameters from \eqref{eq_LMI00}.
\end{theorem}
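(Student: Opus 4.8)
The plan is to mirror the proof of Theorem \ref{Th1}, but to replace the purely $L_2$ functional \eqref{eq_L_F} by an \emph{energy-type} Lyapunov functional adapted to the second-order-in-time dynamics \eqref{eq_cl_loop0}. Concretely, I would take
\begin{equation*}
V(t)=\int_0^l\bigl[z_t^2(x,t)+a_1(x)z_x^2(x,t)\bigr]dx+p\int_0^l z(x,t)z_t(x,t)\,dx,
\end{equation*}
i.e.\ kinetic plus gradient energy augmented by the cross term $p\int_0^l z z_t\,dx$. The single scalar parameter $p\in(-0.5,0.5)$ is precisely the weight of this cross term, and the first step is to show that for $|p|<0.5$ the functional $V$ is positive definite and equivalent to $\int_0^l(z_t^2+z_x^2)\,dx$, by bounding the indefinite cross term with Young's inequality together with the Wirtinger-type estimate of Lemma \ref{lem_Har_ext}; this coercivity is what will let a decay estimate for $V$ deliver the $L_2$ bound \eqref{eq_th2}.

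Next I would differentiate $V$ along \eqref{eq_cl_loop0}. The key structural simplification is that the principal term $\partial_x[a_1 z_x]$ is eliminated twice: integrating $2\int_0^l a_1 z_x z_{xt}\,dx$ by parts (the boundary terms vanishing under \eqref{eq1_b1} or \eqref{eq1_b2}) produces $-2\int_0^l\partial_x[a_1 z_x]z_t\,dx$, which cancels the corresponding contribution of $2\int_0^l z_t z_{tt}\,dx$; and the cross term contributes $p\int_0^l z\,\partial_x[a_1 z_x]\,dx=-p\int_0^l a_1 z_x^2\,dx$, which together with $a_1\ge\underline{a}_1$ furnishes the $-p\underline{a}_1$ entry of $\bar{\Psi}_{22}$. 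What survives is a combination of the damping term $-2\int_0^l b z_t^2\,dx$ (yielding the $-2\underline{b}$ in $\bar{\Psi}_{33}$ after $b\ge\underline{b}$), together with the convection, reaction, disturbance, and control-mismatch terms, now carrying the extra $z_t$ coordinate absent from the parabolic case.

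I would then reproduce the two analytic estimates of the parabolic proof. Young's inequality with weight $\bar{R}$, applied to the control-mismatch products $2K\int_0^l z_t(z-F^j)\,dx$ and $pK\int_0^l z(z-F^j)\,dx$, introduces the $K\bar{R}$ and $K\bar{R}^{-1}$ terms; and Lemma \ref{lem_Har_ext}, applied to $\int(z-F^j)^2$ using condition (ii), $F^j(\bar{x}_j,t)=z(\bar{x}_j,t)$, replaces it by $\frac{4\Delta^2}{\pi^2}\int(z_x-F_x^j)^2$. Collecting everything and peeling off the disturbance contributions $\beta_1\int_0^l f^2\,dx$ and $\beta_2\sum_j\int_{x_j}^{x_{j+1}}(F_x^j)^2\,dx$, I expect the remainder to be exactly $\sum_j\int_{x_j}^{x_{j+1}}\eta^{\mathrm T}\bar{\Psi}\eta\,dx$ with $\eta=\mathrm{col}\{z,z_x,z_t,f,F_x^j\}$ and $\bar{\Psi}$ as in \eqref{eq_LMI000}.

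Finally, since $\bar{\Psi}$ is jointly affine in the uncertain coefficients $a_2$, $\phi$, and $b$ (each enters only linearly, with no products between them), negative semidefiniteness over the whole box $[\underline{a}_2,\overline{a}_2]\times[\underline{\phi},\overline{\phi}]\times[\underline{b},\overline{b}]$ follows by convexity from its validity at the eight vertices \eqref{eq_LMI00}, exactly as argued via \cite{Fridman14} in Theorem \ref{Th1}. This produces a differential inequality to which Lemma \ref{lem_Hal} applies, yielding \eqref{eq_th2} once the coercivity of $V$ (from $|p|<0.5$) is invoked to pass from $V$ back to $\|z\|_{L_2}^2$. I expect the genuine obstacle to be the first step: choosing the functional and pinning down the admissible range of $p$ so that $V$ is coercive in the energy norm while the cross term still generates enough negativity, through the damping $b$ and the gain $K$, to dominate the reaction and convection terms. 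This balancing is the new difficulty relative to the parabolic case, where the plain $L_2$ functional required no such trade-off.
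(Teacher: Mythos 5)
Your overall strategy coincides with the paper's: an energy functional with a cross term $p\int zz_t$, integration by parts to cancel $\partial_x[a_1z_x]$ against the gradient energy, the cross term producing $-p\underline{a}_1\int z_x^2$, Young's inequality with weight $\bar{R}$ on the control mismatch, Lemma \ref{lem_Har_ext} applied to $\int(z-F^j)^2$ via condition (ii), the vertex/affinity argument from \cite{Fridman14}, and Lemma \ref{lem_Hal} to close. All of that matches the paper's proof step for step.

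The one substantive discrepancy is the Lyapunov functional itself. The paper takes
$V(t)=\int_{0}^{l}\bigl[\underline{a}\,z_{x}^2 + z^2 + p\,zz_t + z_t^2\bigr]dx$,
i.e.\ it \emph{includes} the $L_2$ norm of $z$, and the restriction $p\in(-0.5,0.5)$ is justified directly by the pointwise inequality $z^2+pzz_t+z_t^2\ge 0$ --- no Wirtinger estimate is needed for coercivity. Your functional omits the $\int z^2$ term, and this is not cosmetic: the term $2\int zz_t$ in $\dot V$ is precisely what produces the constant $1$ in $\bar{\Psi}_{13}$, and $2\delta\int z^2$ from $2\delta V$ is what places the $2\delta$ in $\bar{\Psi}_{11}$. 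With your choice of $V$ neither entry appears where the statement puts it (the $2\delta$ would instead land on the $z_x$ and $z_t$ diagonal blocks), so the quadratic form you collect at the end cannot be "exactly $\bar{\Psi}$ as in \eqref{eq_LMI000}" --- you would be proving a variant theorem with a different LMI. Moreover, recovering the stated bound \eqref{eq_th2} on $\|z\|_{L_2}^2$ from your $V$ forces you through a Wirtinger inequality on the whole interval, which introduces an $l$-dependent constant absent from \eqref{eq_th2}; in the paper's version $V\ge(1-|p|/2)\|z\|_{L_2}^2$ falls out of the $z^2+pzz_t+z_t^2$ block directly. (On the other hand, your use of $a_1(x)z_x^2$ rather than the paper's constant weight $\underline{a}\,z_x^2$ makes the integration-by-parts cancellation exact, which is arguably cleaner than what the paper writes.) To prove the theorem as stated, reinstate the $\int z^2$ term in $V$ and re-derive the entries.
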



\begin{proof} 
Introduce the following Lyapunov functional
\begin{equation}
\label{eq_L_F0}
\begin{array}{l}
V(t)=\int_{0}^{l} [\underline{a} z^2_{x}(x,t) + z^2(x,t)
\\
~~~~~~~~~
 + p z(x,t)z_t(x,t) + z_t^2(x,t)] dx.
\end{array}
\end{equation}
The inequality $z^2 + p zz_t + z_t^2 \geq 0$ holds for $p \in (-0.5;0.5)$. Therefore, $V(t) \geq 0$. 
Differentiating $V(t)$ in time along the trajectories of \eqref{eq_cl_loop0}, we have
\begin{equation}
\label{eq_L_F_dif10}
\begin{array}{l}
\dot{V}(t) + 2 \delta V(t)=
2\int_{0}^{l}[a_1(x) z_x(x,t) z_{xt}(x,t) 
\\~~~~~~
+ z(x,t) z_t(x,t) + 0.5 p z_t^2(x,t) +\\~~~~~~

[0.5 p z(x,t) + z_t(x,t)]
[\frac{\partial}{\partial x} [a_1(x) z_{x}(x,t)] + 
\\~~~~~~
a_2(x) z_x(z,x,t)
-b(z,x,t) z_t(x,t)
\\~~~~~~
+f(x,t)-(K-\phi) z(x,t)] dx
\\~~~~~~
+ 2 K \sum_{j=0}^{N-1} \int_{x_j}^{x_{j+1}} [0.5 p z(x,t) + z_t(x,t)]
\\~~~~~~
\times [z(x,t) -F^j(x,t)]
+2 \delta \int_{0}^{l} z^2(x,t) dx.
\end{array}
\end{equation}

Taking into account the boundary conditions \eqref{eq1_b1} or \eqref{eq1_b2}, consider the integration by parts
\begin{equation}
\label{eq_po_chast01}
\begin{array}{l}
2 \int_{0}^{l}  z_t(x,t) \frac{\partial}{\partial x} [a_1(x) z_{x}(x,t)] dx
\\
=2 a_1(x) z_x(x,t) z_t(x,t) \Big|_{0}^{l}
\\
-2 \int_{0}^{l} a_1(x) z_{x}(x,t) z_{xt}(x,t) dx 
\\
=-2 \underline{a}_1 \int_{0}^{l} z_{x}(x,t) z_{xt}(x,t) dx.
\end{array}
\end{equation}

Using Young's inequality for the penultimate term in \eqref{eq_L_F_dif10}, we obtain
\begin{equation}
\label{eq_W_I10}
\begin{array}{l}
2 K \sum_{j=0}^{N-1} \int_{x_j}^{x_{j+1}} [0.5 p z(x,t) + z_t(x,t)]
\\~~~~
\times [z(x,t) - F^j(x,t)] dx \leq
\\~~~~
\leq 
K \bar{R} \int_{0}^{l} [0.5 p z(x,t) + z_t(x,t)]^2 dx
\\~~~~
+ K \bar{R}^{-1} \sum_{j=0}^{N-1} \int_{x_j}^{x_{j+1}} [z(x,t) - F^j(x,t)]^2 dx.
\end{array}
\end{equation}

Considering $z(\bar{x}_j,t)=F^j(\bar{x}_j,t)$ and applying Lemma \ref{lem_Har_ext} to \eqref{eq_W_I10}, we have \eqref{eq_W_I55}.
Denote by $\bar{\eta}_j=col\{z(x,t), z_x(x,t), z_t(x,t), f(x,t), F_x^j(x,t)\}$. Applying \eqref{eq_po_chast}, \eqref{eq_po_chast01}, \eqref{eq_W_I10} and \eqref{eq_W_I55} to \eqref{eq_L_F_dif10}, rewrite result as follows
\begin{equation}
\label{eq_Lyap1}
\begin{array}{l}
\dot{V}(t)+ 2\delta V(t) -\beta_1 \int_{0}^l f^2(x,t) dx 
\\~~~~~~
- \beta_2  \sum_{j=0}^{N-1} \int_{x_j}^{x_{j+1}} F_x^j(x,t) dx
\\~~~~~~
 \leq 
 \sum_{j=0}^{N-1} \int_{x_j}^{x_{j+1}} \bar{\eta}^{\rm T} \bar{\Psi} \bar{\eta} dx,
\end{array}
\end{equation}
where $\bar{\Psi}$ is given by \eqref{eq_LMI000}.
The expression \eqref{eq_LMI000} is affine with respect to the parameters $\phi$, $a_2$ and $b$. 
According with\cite{Fridman14}, if LMIs \eqref{eq_LMI0} hold in the vertices $\phi=\{\underline{\phi},\overline{\phi}\}$, $a_2=\{\underline{a}_2,\overline{a}_2\}$, $b=\{\underline{b},\overline{b}\}$, then LMI $\bar{\Psi} \leq 0$ holds for any $\phi=[\underline{\phi},\overline{\phi}]$, $a_2=[\underline{a}_2,\overline{a}_2]$, $b=[\underline{b},\overline{b}]$.
Therefore, inequality \eqref{eq_Nerav} holds.
Considering Lemma \ref{lem_Hal}, the solution of differential inequality \eqref{eq_Lyap1} can be defined as
\eqref{eq_solution_V}.
Then, inequality \eqref{eq_th2} follows from \eqref{eq_solution_V} but taking into account parameters from \eqref{eq_LMI00}.
Theorem \ref{Th2} is proved.
\end{proof}


\section{Well-posedness of the closed-loop system} \label{Sec5}

In this section, we show that there exist solutions of the closed-loop systems \eqref{eq_cl_loop} and \eqref{eq_cl_loop0} satisfying Dirichlet boundary conditions \eqref{eq1_b1}. The well-posedness under the mixed conditions \eqref{eq1_b2} can be proved similarly.

\subsection{The closed-loop system \eqref{eq_cl_loop}}
\label{KZ}

Consider the closed-loop system \eqref{eq_cl_loop} with the boundary conditions \eqref{eq1_b1}. The boundary-value problem \eqref{eq_cl_loop}, \eqref{eq1_b1} can be represented as an abstract inhomogeneous Cauchy problem in the Hilbert space $H=L_2(0,l)$ as follows
\begin{equation}
\label{eq_new1}
\begin{array}{l}
\dot{z}(t) = \mathcal{A} z(t) +F(t,z(t)), ~~ z_0=z(0) \in \mathcal{D(A)}.
\end{array}
\end{equation}
Here the operator
$\mathcal{A} = \frac{\partial}{\partial x}[a_1(x)\frac{\partial}{\partial x}] + a_2(x)\frac{\partial}{\partial x}$
has the dense domain 
$\mathcal{D(A)} = \{z \in H_2(0,l): z(0)=z(l)=0\}$,
$F(t,z(t))=f(t)+u(t)+\phi(t,z(t))z(t)$, the function $u(t)$ is given by  \eqref{eq_g_c_l}.
According to Theorem \ref{Th1}, \cite{Henry93} and \cite{Curtain95}, infinitesimal operator $\mathcal{A}$ generates a strictly continuous exponentially stable semigroup ($C_0$-semigroup) $T(t)$. 
Then, the boundary-value problem \eqref{eq_new1} can be represented as a boundary-value problem on the semi-infinite interval $[0,\infty)$ and its solutions can be found as solutions of the following integral equation
\begin{equation}
\label{eq_new2}
\begin{array}{l}
z(t) =T(t) z(0) + \int_{0}^{t} T(t-s )F(s,z(s))ds.
\end{array}
\end{equation}
Since the function $F(t,z(t))$ is continuously differentiable w.r.t. $t$, then, according to Theorem 3.1.3 \cite{Curtain95}, there is a unique solution of
\eqref{eq_new1} and this solution satisfies the integral equation \eqref{eq_new2}.


\subsection{The closed-loop system \eqref{eq_cl_loop0}}

Now we consider the closed-loop system \eqref{eq_cl_loop0} with the boundary conditions \eqref{eq1_b1}.
Rewrite the boundary-value problem \eqref{eq_cl_loop0}, \eqref{eq1_b1} as an abstract inhomogeneous Cauchy problem in the Hilbert space $H=L_2(0,l)$ in the form
\begin{equation}
\label{eq_new10}
\begin{array}{l}
\dot{\xi}(t) = \bar{\mathcal{A}} \xi(t) +\bar{F}(t,\xi(t)), ~~ \xi_0=\xi(0) \in \mathcal{\bar{\mathcal{A}}},
\end{array}
\end{equation}
where $\xi=col\{z,z_t\}$, the operator
$\bar{\mathcal{A}} = 
\begin{bmatrix}
0 & 1 \\
\frac{\partial}{\partial x}[a_1(x)\frac{\partial}{\partial x}]+a_2(x)\frac{\partial}{\partial x} & 0
\end{bmatrix}$
has the dense domain
$\mathcal{D(\bar{\mathcal{A}})} = \{\xi \in H_2(0,l): \xi(0)=\xi(l)=0\}$,
$\bar{F}(t,\xi(t))=
[0~1]^{\rm T}
\left[ f(t)+u(t) + \phi(t,\xi_1(t))\xi_1(t) - b(t,\xi_1(t))\xi_2(t)\right]$, the function $u(t)$ is given by \eqref{eq_g_c_l}. 
According to Theorem \ref{Th1}, \cite{Henry93} and \cite{Curtain95}, the infinitesimal operator $\bar{\mathcal{A}}$ generates a strictly continuous exponentially stable semigroup ($C_0$-semigroup) $T(t)$. Therefore, further considerations for \eqref{eq_new10} are similar to those for \eqref{eq_new1} in  Subsection \ref{KZ}.


\section{Examples}
\label{Examples}


\subsection{The simulations of systems \eqref{eq1} and \eqref{eq01}}

Let $l=1$. 
To simulate systems \eqref{eq1} and \eqref{eq01} we divide the segment $[0,1]$ into $160$ sub-intervals of the same length with a sampling step in the spatial variable $D=1/160$.
The first and the second derivatives in the spatial variable of $z(x,t)$ are calculated in the points $0=x_0<x_1<...x_l<...<x_{160}$ according with the following expressions $z_{x}(x_k,t) = \frac{z(x_{k+1},t)-z(x_k,t)}{D}$ and $z_{xx}(x_k,t) = \frac{z(x_{k+1},t)-2z(x_k,t)+z(x_{k-1},t)}{D^2}$.

For design the control law \eqref{eq_g_c_l} we divide the segment $[0,1]$ into $N=2$ and $N=10$ equal sub-intervals (see \eqref{eq1_sub_int}).

Consider systems \eqref{eq1} and \eqref{eq01} under the Dirichlet boundary conditions \eqref{eq1_b1} and $a_1(x) \in [0.5,2]$, $a_2(x) \in [-5,5]$, $\phi(z,x,t)  \in [-5,5]$, $b(z,x,t)=[-5,-1]$, $|f(x,t)| \leq 20$ for any $x$ and $t$. The initial conditions are given as follows $z(x,0)=\sin(\pi x)$ è $z_x(x,0)=0$.

The matrix inequalities \eqref{eq_LMI0} and \eqref{eq_LMI00} are feasiable for $K \geq 100$.


\subsection{The results of simulations of system \eqref{eq1}}

In \eqref{eq1} choose $a_1(x) =1+\sin(x)$, $a_2(x) =-2-\sin(1.3x)$, $\phi(z,x,t)  = 5+\cos(3z)$, and $f(x,t)=0.2[\sin(30t)+\sin(2t)]$. 
Consider two partitions for $N=2$ and $N=10$ (see \eqref{eq1_sub_int}). 
Let $\bar{x}_j=0.5(x_{j} + x_{j+1})$. 
Choose $F^j(x,t)$ in the control law \eqref{eq_g_c_l} from example 3, where $\alpha=10^3$, as well as $\beta_j=1/8$ for $N=2$ and $\beta_j=1/80$ for $N=10$. 
In Figs.~\ref{Fig_Fridman}--\ref{Fig_My_Big} the solutions of \eqref{eq1} and the spatio-temporal graphs of $u(x,t)$ are illustrated for:
\begin{enumerate}
\item[1)] the control law $u(x,t)=-K z(\bar{x}_j,t)$ from \cite{Fridman12,Fridman14,Fridman20} for $K=100$;
\item[2)] the proposed control law \eqref{eq_g_c_l} for $K=-100$;
\item[3)] the proposed control law \eqref{eq_g_c_l} for $K=-500$.
\end{enumerate}
Figs.~\ref{Fig_Fridman}, \ref{Fig_My} show, that the solutions of $z(x,t)$ almost the same for the proposed control law and the one from \cite{Fridman12}.
However, the proposed control law provides exponential stability under distributed disturbances.
If the coefficient $K$ is increased by $5$ times, then the magnitude of the control signal is also increased approximately $5$ times.
In this case the rate of exponential convergence and quality of disturbance rejection in the steady state are higher than the ones from \cite{Fridman12} at $K=100$.
 
Now we analyze the control costs. Figs.~\ref{Fig_Energy1},~\ref{Fig_Energy2} illustrate the integral difference in the form $I=\sum_{j=0}^{N-1} \int_{0}^{t} \left(|u_{F\&B}(\bar{x}_j,s)|-|u_{proposed}(\bar{x}_j,s)| \right)ds$, where $u_{F\&B}(\bar{x}_j,t)$ is the control law from \cite{Fridman12}, $u_{proposed}(\bar{x}_j,t)$ is the proposed one. 
The advantages of the proposed algorithm is clearly seen, i.e. the control costs of the proposed control law are less than ones from \cite{Fridman12}. 
Moreover, the proposed control law can be approximated by finite actions, while the control law from \cite{Fridman12,Fridman14,Fridman20} requires implementation throughout the whole spatial variable.
The simulations for the function $F^j(x,t)$ from example 2 with $\alpha=100$ are comparable with the results obtained for the function $F^j(x,t)$ from example 3.

\begin{figure}[h]
\center{\includegraphics[width=1\linewidth]{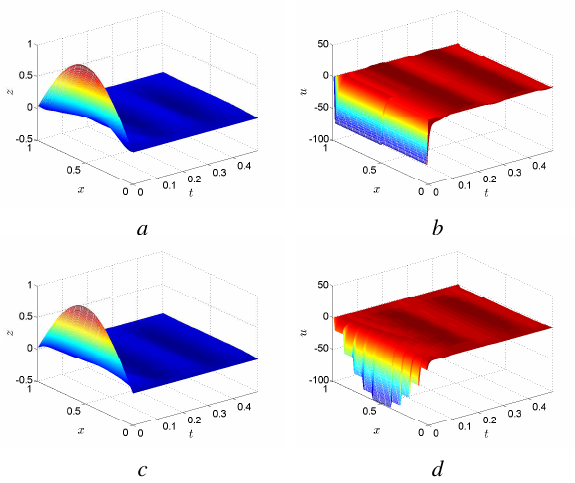}}
\caption{The spatio-temporal graphs of $z(x,t)$ and $u(x,t)$ from \cite{Fridman12,Fridman14,Fridman20} for $N=2$ (\textit{a,b}) and $N=10$ (\textit{c,d}) for $K=100$.}
\label{Fig_Fridman}
\end{figure}

\begin{figure}[h]
\center{\includegraphics[width=1\linewidth]{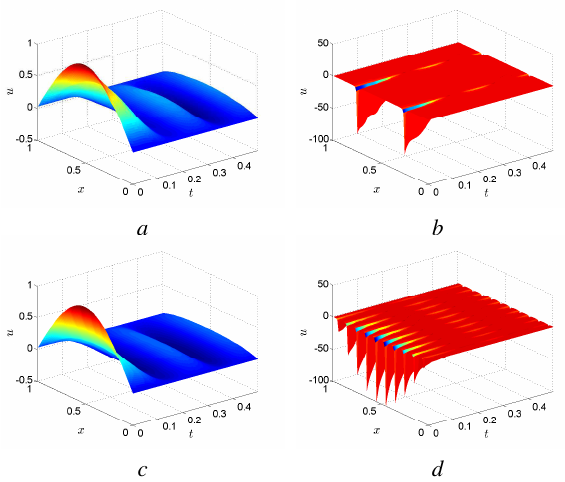}}
\caption{The spatio-temporal graphs of $z(x,t)$ and $u(x,t)$ for the proposed control law for $N=2$ (\textit{a,b}), $N=10$ (\textit{c,d}) and $K=100$.}
\label{Fig_My}
\end{figure}

\begin{figure}[h]
\center{\includegraphics[width=1\linewidth]{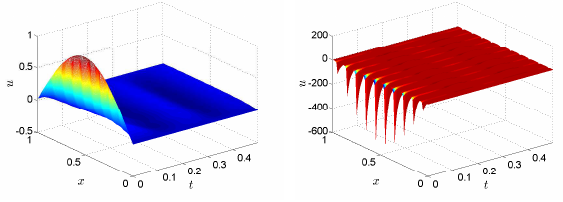}}
\caption{The spatio-temporal graphs of $z(x,t)$ è $u(x,t)$ for the proposed control law for $N=10$ and $K=500$.}
\label{Fig_My_Big}
\end{figure}

\begin{figure}[h]
\center{\includegraphics[width=0.9\linewidth]{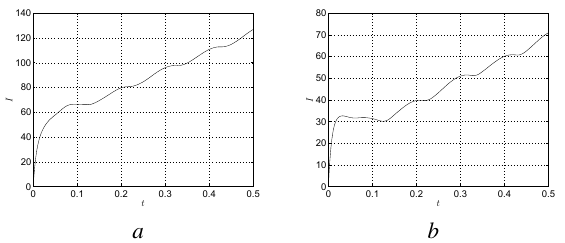}}
\caption{The plots of integral difference for $N=2$  (\textit{a}) and $N=10$ (\textit{b}) between the control law \cite{Fridman12,Fridman20} and the proposed control law for $K=100$.}
\label{Fig_Energy1}
\end{figure}

\begin{figure}[h]
\center{\includegraphics[width=0.4\linewidth]{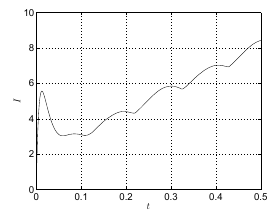}}
\caption{The plots of integral difference for $N=10$ between the control law \cite{Fridman12,Fridman14,Fridman20} and the proposed control law for $K=100$.}
\label{Fig_Energy2}
\end{figure}


\subsection{The results of simulations of system \eqref{eq01}}

In \eqref{eq01} choose $a_1(x) =1+\sin(x)$, $a_2(x) =-2-\sin(1.3x)$, $b(z,x,t)=-2-\sin(1.1z)$, $\phi(z,x,t)  = 5+\cos(3z)$ and $f(x,t)=0.2[\sin(30t)+\sin(2t)]$.
In Figs.~\ref{Fig_Wave1}, \ref{Fig_Wave2} the solutions of \eqref{eq01} and spatio-temporal graphs of $u(x,t)$ are given for the proposed control law \eqref{eq_g_c_l} with the function $F^j(x,t)$ from example 3 for $\alpha=10^3$, as well as $\beta_j=1/8$ for $N=2$ and $\beta_j=1/80$ for $N=10$ and $K=-100$, $K=-500$.
 The proposed control law provides exponential stability under disturbances. 
 The simulations for the function $F^j(x,t)$ from example 2 with $\alpha=100$ are comparable with the results obtained for the function $F^j(x,t)$ from example 3.

\begin{figure}[h]
\center{\includegraphics[width=1\linewidth]{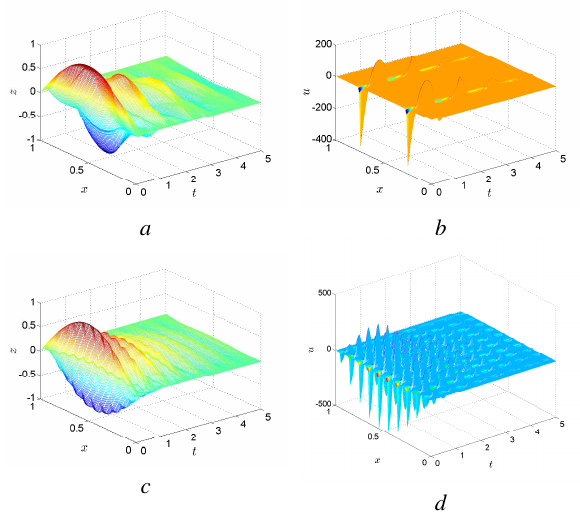}}
\caption{The spatio-temporal graphs of $z(x,t)$ and $u(x,t)$ for the proposed control law for $N=2$ (\textit{a, b}), $N=10$ (\textit{c, d}) and $K=100$.}
\label{Fig_Wave1}
\end{figure}

\begin{figure}[h]
\center{\includegraphics[width=1\linewidth]{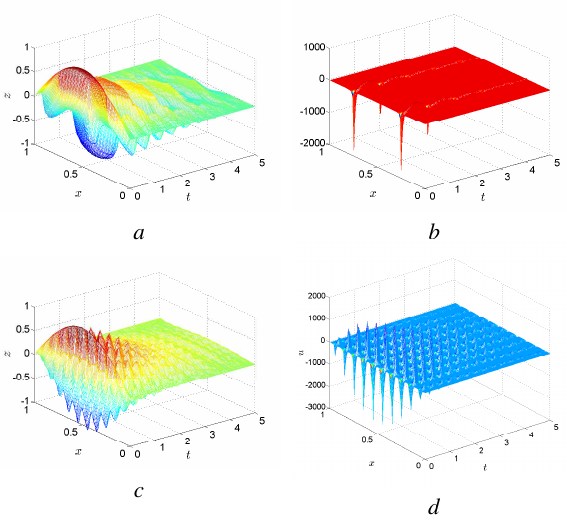}}
\caption{The spatio-temporal graphs of $z(x,t)$ and $u(x,t)$ for the proposed control law for $N=2$ (\textit{a, b}), $N=10$ (\textit{c, d}) and $K=500$.}
\label{Fig_Wave2}
\end{figure}


\section{Conclusions} \label{Sec7}

A sampled-data in space control law is proposed for scalar semilinear differential equations of parabolic and hyperbolic types with interval-indefinite parameters and external bounded disturbances.
The control law is only used a finite set of measurements of the output signal.
Also the control law depends on a function depending on the spatial coordinate and the current measurement.
This function allows one to achieve different properties, for example, to provide reduced control costs.
The exponential stability of the closed-loop systems and robustness with respect to unknown parameters and external disturbances are considered.
The simulations have confirmed the theoretical results and have showed the efficiency of the proposed algorithm compared with ones from \cite{Fridman12,Fridman14,Fridman20}.


%

\section*{Acknowledgment}

The results were proposed with the support of a grant from the Russian Science Foundation (No. 18-79-10104) in IPME RAS.

$ $

$ $

\end{document}